
\documentclass[letterpaper, 10 pt, conference]{ieeeconf}  

\IEEEoverridecommandlockouts                              

\overrideIEEEmargins                                      

\RequirePackage{etex}
\usepackage{hyperref}
\usepackage{verbatim} 
\usepackage{amsmath} 
\usepackage{amssymb}  
\usepackage{ntheorem}
\usepackage{algorithm}
\usepackage{mathtools}
\usepackage{siunitx}
\sisetup{output-exponent-marker=\ensuremath{\mathrm{e}}}
\usepackage{tabularx}
\usepackage[usenames,dvipsnames]{xcolor}
\usepackage{psfrag,amsbsy,graphics,float}
\usepackage[dvips]{graphicx}
\usepackage{pgfplots}
\usepackage{pgfplotstable}
\usepackage{multirow}
\usepgfplotslibrary{fillbetween}
\usepackage{booktabs}
\pgfplotsset{compat = newest}
\usetikzlibrary{arrows,positioning,shapes,intersections,patterns,calc,fit} 
\usepackage{cite}

\setlength{\abovedisplayskip}{3pt}
\setlength{\belowdisplayskip}{3pt}

\newtheorem{defn}{Definition}
\newtheorem{rem}{Remark}

\newtheorem{lem}{Lemma}
\newtheorem{cor}{Corollary}

\newtheorem{assum}{Assumption}

\newcommand\tran{\mkern-2mu\raise1.25ex\hbox{$\scriptscriptstyle\top\hspace{0.5mm}$}\mkern-3.5mu}
\newcommand{\R}{\mathbb{R}}

\newcommand{\N}{\mathbb{N}}

\newcommand{\M}{\mathcal{M}}
\newcommand{\D}{\mathcal{D}}

\newcommand{\K}{\mathcal{K}}

\newcommand{\bm}[1]{{\boldsymbol{#1}}}
\newcommand{\Verts}[1]{{\left\Vert #1 \right\Vert}}
\DeclareMathOperator{\diag}{diag}

\DeclareMathOperator*{\argmin}{argmin}

\newcommand{\bma}{\bm a}
\newcommand{\x}{\bm x}
\newcommand{\m}{\bm m}

\newcommand{\q}{\bm q}
\newcommand{\dq}{\dot{\bm q}}
\newcommand{\e}{\bm e}
\newcommand{\de}{\dot{\bm e}}
\newcommand{\ddq}{\ddot{\bm q}}
\newcommand{\f}{\bm{f}}
\newcommand{\g}{\bm{g}}
\renewcommand{\u}{\bm{u}}
\newcommand{\y}{\bm{y}}

\usepackage[noabbrev]{cleveref} 
\crefname{rem}{Remark}{Remarks}
\crefname{exam}{Example}{Examples}
\crefname{assum}{Assumption}{Assumptions}
\crefname{prop}{Proposition}{Propositions}
\crefname{cor}{Corollary}{Corollaries}
\crefname{lem}{Lemma}{Lemmas}
\crefname{thm}{Theorem}{Theorems}
\crefname{defn}{Definition}{Definitions}
\crefname{figure}{Fig.}{Fig.}
\Crefname{figure}{Figure}{Figures}
\crefname{equation}{}{}

\usepackage{autonum}

\title{\LARGE \bf
Closed-loop Model Selection for Kernel-based Models\\ using Bayesian Optimization
}

\author{Thomas Beckers$^{1}$, Somil Bansal$^{2}$, Claire J. Tomlin$^{2}$ and Sandra Hirche$^{1}$
\thanks{$^{1}$ are with the Chair of Information-oriented Control (ITR), Department of Electrical and Computer Engineering,
Technical University of Munich, 80333 Munich, Germany, {\tt\footnotesize \{t.beckers, hirche\}@tum.de}\newline
$^{2}$ are with the Department of Electrical Engineering and Computer Sciences,
UC, Berkeley, USA, {\tt\footnotesize \{somil, tomlin\}@eecs.berkeley.edu}}
}

\begin{document}

\maketitle
\thispagestyle{empty}
\pagestyle{empty}

\begin{abstract}
Kernel-based nonparametric models have become very attractive for model-based control approaches for nonlinear systems. However, the selection of the kernel and its hyperparameters strongly influences the quality of the learned model. Classically, these hyperparameters are optimized to minimize the prediction error of the model but this process totally neglects its later usage in the control loop. 
In this work, we present a framework to optimize the kernel and hyperparameters of a kernel-based model directly with respect to the closed-loop performance of the model.
Our framework uses Bayesian optimization to iteratively refine the kernel-based model using the observed performance on the actual system until a desired performance is achieved.
We demonstrate the proposed approach in a simulation and on a 3-DoF robotic arm.
\end{abstract}

\section{Introduction}
Given a dynamic model, control mechanisms such as model predictive control and feedback linearization can be used to effectively control nonlinear systems. 
However, when an accurate mathematical model of the system is not available, machine learning offers powerful tools for the modeling of dynamical systems. 
A special class of models that has obtained a lot of attention recently is kernel-based models, such as Support Vector Machines (SVM) and Gaussian Processes (GP).
In contrast to parametric models, kernel-based models require only minimal prior knowledge about the system dynamics, and have been sucessfully used to model complex, nonlinear systems~\cite{bishop2006pattern}.
Using the kernel-based approach for modeling a system requires the selection of an appropriate kernel function and a set of hyperparameters for that function.
Typically, these selections are data-based, e.g. through minimizing a loss function that is often a trade-off between the prediction error and the complexity of the model.
However, the full complex and accurate dynamics model might not even be required depending on the task. 
Moreover, this procedure neglects the fact that the learned model is used for the control of the actual system, which can result in reduced controller performance~\cite{abbeel2006using,geversaa2005identification,hjalmarsson1996model}.

In this work, we propose a Bayesian Optimization (BO)-based active learning framework to optimize the kernel and its hyperparameters directly with respect to the performance of the closed-loop rather than the prediction error, see~\cref{fig:sys_bsb}. 
This optimization is performed in a sequential fashion where at each step of the optimization, BO takes into account all the past data points and proposes the most promising kernel and hyperparameters for the next evaluation.
The outcome is used to define a kernel-based model that is utilized by a given controller. The obtained model-based controller is then applied to the actual system in a closed-loop fashion to evaluate its performance. This information is then used by BO to optimize the next evaluation. Consequently, multiple evaluations on the actual system must be performed, which is often feasible such as for systems with repetitive trajectories. BO thus does not aim to obtain the most accurate dynamics model of the system, but rather to optimize the performance of the closed-loop system.
\begin{figure}[t]
\begin{center}
\vspace{0.15cm}
	\begin{tikzpicture}[auto,>=latex]
\tikzstyle{block} = [draw, fill=white, rectangle, minimum height=0.8cm, minimum width=2cm, align=center,inner sep=0.5mm]
\tikzstyle{sum} = [draw, fill=white, circle, node distance=1cm]
\tikzstyle{input} = [coordinate]
\tikzstyle{output} = [coordinate]
\tikzstyle{via} = [coordinate]

    \node [block] (controller) {Controller};
    \node [via,  above= 0.3 cm of controller] (viabl) {};
    \node [sum, left= 0.5cm of controller,label={[xshift=-3mm,yshift=-2mm]center:-}] (sumbl) {};
    \node [via,  right= 0.4cm of controller] (viabl1) {};
    \node [block, right= 0.8cm of controller] (system) {System};
    \node [block, below= 0.3 cm of controller] (GPR) {Kernel-based\\model};
    \node [block, left= 0.5cm of GPR] (BO) {Bayesian\\optimization};
    \node [block, below= 0.3 cm of BO] (CF) {Cost\\function};
    
    \node [input, left= 1.2cm of sumbl] (input) {};
    \node [output, right= 0.3cm of system] (output) {};
    
    \draw [->] (input) -- node[pos=0.2] {Reference} (sumbl);
    \draw [->] (sumbl) -- (controller);
    \draw [->] (controller) -- (system);
    \draw [-] (system) -- (output);
    \draw [-] (output) |- (viabl);
    \draw [->] (viabl) -| (sumbl);
    \draw [->] (viabl1) |- ([yshift=0.2cm]GPR.east);
    \draw [->] (output) |- ([yshift=-0.2cm]GPR.east);
    \draw [->] (viabl1) |- ([yshift=0.2cm]CF.east);
    \draw [->] (output) |- ([yshift=-0.2cm]CF.east);
    \draw [->] (CF) -- (BO);
    \draw [->] (BO) -- (GPR);
    \draw [->] (GPR) -- (controller);
    \draw [->] (input) -- (sumbl);
    
%
%
		
	\node[draw,dashed,inner xsep=0.2cm,inner ysep=0.2cm,fit=(BO) (CF)] {};

\end{tikzpicture}
	\vspace{-0.2cm}\caption{Closed-loop model selection for kernel-based models. BO is used to optimize the kernel and its hyperparameters directly based on the evaluation of a cost function.}
	\vspace{-0.8cm}
	\label{fig:sys_bsb}
\end{center}
\end{figure}
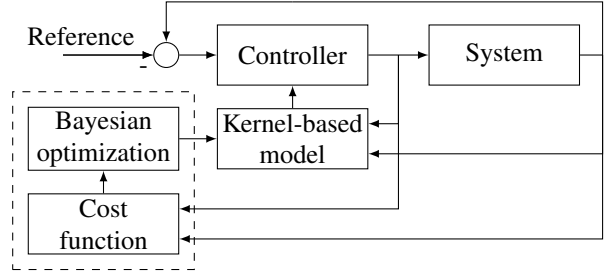

Typically, system identification approaches aim to obtain an open-loop dynamics model of the system by minimizing the state prediction error.
This problem has been well studied in literature for both linear systems, e.g.~\cite{ljung1998system}, as well as for nonlinear systems using the function approximators such as GP~\cite{chowdhary2012model, umlauft:cdc2017, beckers:ifacwc2017} and neural networks (NN)~\cite{narendra1990identification, bansal2016learning}. 
However, a model obtained using this open-loop procedure can result in a reduced controller performance on the actual system~\cite{hjalmarsson1996model}. To overcome these challenges, adaptive control mechanisms and iterative learning control have been studied where the system dynamics or control parameters are optimized based on the performance on the actual system, e.g. ~\cite{aastrom2013adaptive, clarke1985generalized,bristow2006survey}. However, these approaches are mostly limited to linear systems and controllers or assume at least a parametric system model. Recently, learning-based controller tuning mechanisms have also been proposed~\cite{lewis2013reinforcement, Calandra2015a}, but such methods might be highly data-inefficient for general nonlinear systems as they typically completely disregard the underlying dynamics~\cite{recht2018tour}. 

To overcome the challenges of open-loop system identification, closed-loop system identification methods have been studied that lead to more robust control performance on the actual system~\cite{abbeel2006using,geversaa2005identification,hjalmarsson1996model}.
A similar approach is presented in \cite{bansal2017goal}, wherein the authors also propose a goal-driven dynamics learning approach via BO.
However, the authors aim to identify a linear dynamics model from \textit{scratch} which might be a) unnecessary, as often an approximate dynamics model of the system is available and b) insufficient for general nonlinear systems. Moreover, stability of the closed-loop system where the controller is based on the linear dynamics model cannot be guaranteed, whereas our approach explicitly allows to preserve the convergence properties of the initial closed-loop system.
To summarize, our key contributions are: a) we present a BO-based framework to optimize the kernel function and its hyperparameters of a kernel-based model to maximize the resultant control performance on the actual system; b) through numerical examples and an experiment on a real 3-DoF robot, we demonstrate the advantages of the proposed approach over classical model selection methods. 

\noindent \textbf{Notation:} Vectors $\bm{a}$ are denoted with bold characters. Matrices $A$ are described with capital letters. The term~$A_{i,:}$ denotes the i-th row of the matrix~$A$. The expression~$\mathcal{N}(\mu,\Sigma)$ describes a normal distribution with mean~$\mu$ and covariance~$\Sigma$. The set $\R_{>0}$ denotes the set of positive real numbers.
\section{Problem Setting}\label{sec:ps}
Consider a discrete-time, potentially nonlinear system
\begin{align}
\begin{split}
\x_{k+1}&=\f(\x_k,\u_k),\quad k=\{0,\ldots,n-1\},n\in\N\\
\y_k&=\g(\x_k,\u_k)\label{for:sys}
\end{split}
\end{align}
in which $\f$, $\g$ are unknown functions of the state $\x_k\in\R^{n_x}$ and input $\u_k\in\R^{n_u}$. For the following, we assume that the state mapping $\f\colon\R^{n_x}\times\R^{n_u}\to\R^{n_x}$ and the output mapping $\g\colon\R^{n_x}\times\R^{n_u}\to\R^{n_y}$ are such that there exist a unique state and output trajectory for all $\u_k\in\R^{n_u}$ and $\x_0,~k\geq 0$. 
We assume that a control law $\bm{h}\colon\R^{n_y}\times\R^{n_m}\to\R^{n_u}$
\begin{align}
\u_k=\bm{h}(\y_k-\bm{r}_k,\bm{m}_k)\label{for:ctrllaw}
\end{align}
is given for the system~\cref{for:sys}. The reference $\bm{r}_k\in\R^{n_y}$ is assumed to be zero but the framework is also applicable for a varying signal. In addition to the reference, the control law also depends on the output $\bm{m}_k\in\R^{n_m}$ of a kernel-based model, a regression technique that uses a kernel to perform the regression in a higher-dimensional feature space. The output of a kernel-based model, $\bm{m}_k$, depends on the kernel function $\mathfrak{K}$, its hyperparameters $\bm{\varphi}\in\R^{n_\varphi}$ and system input and output, i.e. $\bm{m}_k=\M(\u_{0:k-1},\y_{0:k},\mathfrak{K},\bm{\varphi})$, where the function $\M$ depends on the class of the kernel-based model, such as GP or SVM, used for the prediction.
\begin{rem}\label{rem:km}
    For example, the output $\bm{m}_k$ can be the prediction of the next state or output of the system based on the current state and input, using the mean and probably the variance of a GP model. This information can then be used by the controller to compute an appropriate system input $\u_k$.  
\end{rem}
The control law $\bm{h}$ might be an output tracking controller designed based on the predicted model output. For possible control laws for different classes of systems, we refer to~\cite{chowdhary2012model, umlauft:cdc2017, beckers:ifacwc2017,suykens2001optimal,berkenkamp2016safe}. The goal of this work is to optimize the model kernel and its hyperparameters such that the corresponding model output $\m_k$ minimizes the following cost functional
\begin{align}
C(\y_{0:k},\u_{0:k})=\sum_{k=0}^{n-1}c(\y_k,\u_k),
\label{for:cfcn}
\end{align}
where $c(\y_k,\u_k)\colon\R^{n_y}\!\times\R^{n_u}\!\to\!\R$ represents the cost incurred for the control input $\u_k$ and the system output $\y_k$.
The cost function here might represent the requirements concerning the closed-loop, e.g. an accurate tracking behavior or a minimized power consumption. 
Note that the cost functional in \cref{for:cfcn} implicitly depends on the kernel-based model $\M$ through $\u_k$, see \cref{for:ctrllaw}.
The optimization of~\cref{for:cfcn} is challenging since the system dynamics in~\cref{for:sys} are unknown and the kernel-based model output $\bm{m}_k$ indirectly influences the cost. To overcome this challenge, we use BO to optimize the kernel and the hyperparameters based on the direct evaluation of the control law in~\cref{for:ctrllaw} on the system~\cref{for:sys} to find those that minimize the cost functional in~\cref{for:cfcn}.
\section{Preliminaries}\label{sec:pre}
\subsection{Kernel-based models}
The prediction of parametric models is based on a parameter vector $\bm{w}\in\R^{n_a}$ which is typically learned using a set of training data points. In contrast, nonparametric models typically maintain a subset of the training data points in memory in order to make predictions for new data points. Many linear models can be transformed into a dual representation where the prediction is based on a linear combination of kernel functions. The idea is to transform the data points of a model to an often high-dimensional feature space where a linear regression can be applied to predict the model output. For a nonlinear feature map $\bm{\phi}\colon\R^{n_a}\!\to\!\R^{n_\phi}$ with $n_\phi\!\in\!\N\!\cup\{\infty\}$, the kernel function is given by the inner product $\mathfrak{K}(\bma,\bma^\prime)\!=\!\langle \bm{\phi}(\bma),\bm{\phi}(\bma^\prime)\rangle,\forall\bma,\bma^\prime\!\in\!\R^{n_a}$.\\
Thus, the kernel implicitly encodes the way the data points are transformed into a higher dimensional space. The formulation as inner product in a feature space allows to extend many standard regression methods. A drawback of kernel-based models is that the selection of the kernel and its hyperparameters heavily influences the interpretation of the data and thus, the quality of the model. Commonly, the kernel and hyperparameters are determined based on the optimization of a loss function such as cross-validation or the likelihood function. In our work, the kernel and its hyperparameters are optimized with respect to performance of the closed-loop system. 

\subsection{Gaussian process}
Extending the concept of kernel functions to probabilistic models leads to the framework of Gaussian process regression (GPR). In particular, GPR is a supervised learning technique which combines several advantages. As probabilistic kernel techniques, GPs provides not only a mean function but also a measure for the uncertainty of the regression. In this work, we use GPR in BO to model the unknown \textit{closed-loop} objective function, as well as for the kernel-based dynamics model $\M$ in the experiment. The GPR can be derived using a standard linear regression model 
\begin{align}
q(\bma)=\bma^\top\bm{w},\quad b=q(\bma)+\epsilon
\end{align}
where $\bma\in\R^{n_a}$ is the input vector, $\bm{w}$ the vector of weights and $q\colon\!\R^{n_a}\!\!\to\!\R$ the function value. The observed value $b\in\!\R$ is corrupted by Gaussian noise $\epsilon\sim\mathcal{N}(0,\sigma_n^2)$. Using the feature map $\bm{\phi}(\bma)$ instead of $\bma$, leads to $f(\bma)=\bm{\phi}(\bma)^\top\bm{w}$ with $f\colon\R^{n_a}\to\R$. The analysis of this model is analogous to the standard linear regression, i.e. we put a prior on the weights such that $\bm{w}\sim\mathcal{N}(\bm{0},\Sigma_p)$ with $\Sigma_p\in\R^{n_\phi\times n_\phi}$. The mean function is usually defined to be zero, see~\cite{rasmussen2006gaussian}. Based on $m$ collected training data points $A=[\bma_1,\ldots,\bma_m]$ and $B=[b_1,\ldots,b_m]^\top$, the prediction $q_*\in\R$ for a new test point $\bma_*\in\R^{n_a}$ can be computed using the Bayes' rule. In particular, it is given by 
\begin{align}
q_*\vert\bma_*,A,B\!\sim\!\mathcal{N}(\bm{k}_*^\top K_{**}^{-1}B, k_{**}\!-\!\bm{k}_*^\top K_{**}^{-1}\bm{k}_*),\label{for:gp2}
\end{align}
where $\mathfrak{K}(\bma,\bma^\prime)=\phi(\bma)^\top\Sigma_p\phi(\bma^\prime)$, $k_{**}=\mathfrak{K}(\bma_*,\bma_*)$ and $\bm{k}_*=[\mathfrak{K}(\bma_*,A_{1,:}),\ldots,\mathfrak{K}(\bma_*,A_{m,:})]^\top$. The covariance matrix $K_{**}=(K+\sigma_n^2 I)$ is defined by $K_{i,j}=\mathfrak{K}(\bm{a}_i,\bm{a}_j)$. Thus, based on the training data $A,B$, the estimation of the function value $q_*$ follows a normal distribution where the mean and the variance depend on the test input $a_*$. Following~\cref{rem:km}, the mean and variance can be used for state estimation in the control law~\cref{for:ctrllaw}. The choice of the kernel and hyperparameters $\bm{\varphi}\in\R^{n_\varphi}$ can be seen as degrees of freedom of the regression. A popular kernel choice in GPR is the squared exponential kernel, see~\cite{rasmussen2006gaussian}. 
One possibility for estimating the hyperparameters~$\bm{\varphi}$ is by means of the likelihood function, thus by maximizing the probability of
\begin{align}
\bm{\varphi}^* &= \frac{1}{2}\left(B^\top K_{**}^{-1}B+\log\vert K_{**} \vert+m\log 2\pi\right)\label{for:likopt}
\end{align}
which results in an automatic trade-off between the data-fit $B^\top K_{**}^{-1}B$ and model complexity $\log\vert K_{**} \vert$, see~\cite{rasmussen2006gaussian}. 

\subsection{Bayesian Optimization (BO)}
Bayesian Optimization is an approach to minimize an unknown objective function based on (only a few) evaluated samples. We use BO to optimize the cost function~\cref{for:cfcn} based on the kernel-based model as this is in general a non-convex optimization problem with unknown objective function (because the system dynamics are unknown), and probably multiple local extrema. BO is well-suited for this optimization as the task evaluations can be expensive and noisy~\cite{shahriari2016taking}. Futhermore, BO is a gradient-free optimization method which only requires that the objective function can be evaluated for any given input. Since the objective function is unknown, the Bayesian strategy is to treat it as a random function with a prior, often as Gaussian process. Note that this GP here is used for the closed-loop cost functional approximation in BO and is not related to the kernel-based model for the controller~\cref{for:ctrllaw} as stated in~\cref{rem:km}. The prior captures the beliefs about the behaviour of the function, e.g. continuity or boundedness. After gathering the cost~\cref{for:cfcn} of the task evaluation, the prior is updated to form the posterior distribution over the objective function. The posterior distribution is used to construct an acquisition function that determines the most promising kernel/hyperparameters for the next evaluation to minimize the cost. Different acquisition functions are used in literature to trade off between exploration of unseen kernel/hyperparameters and exploitation of promising combinations during the optimization process. Common acquisition functions are expected improvement, entropy search, and upper confidence bound~\cite{mockus2012bayesian}. To escape a local objective function minimum, the authors of~\cite{bull2011convergence} propose a method to modify the acquisition function when they seem to over-exploit an area, namely expected-improvement-plus. That results in a more comprehensive and also partially random exploration of the area and, thus it is probably faster in finding the global minimum. We also use this acquisition function for BO in our simulation and the experiment.\\

\section{Closed-loop model selection} \label{sec:MSPE}
Our goal is to optimize the model's kernel and its hyperparameters with respect to the cost functional $C(\y_{0:k},\u_{0:k})$. Thus, in contrast to the classical kernel selection problem, where the kernel is selected to minimize the state prediction error, our goal here is not to get the most accurate model but the one that achieves the best closed-loop behavior. We now describe the proposed overall procedure for the kernel selection to optimize the closed-loop behavior; we then describe each step in detail. 

We start with an initial kernel $\mathfrak{K}$ with hyperparameters~$\bm{\varphi}$, and obtain the control law for the system~\cref{for:sys} using~\cref{for:ctrllaw} with the model output $\m_k=\M(\u_{0:k-1},\y_{0:k},\mathfrak{K},\bm{\varphi})$. This control law is then applied to the actual system, and the cost function~\cref{for:cfcn} is evaluated after performing the control task. 
Depending on the obtained cost value, BO suggests a new kernel and corresponding hyperparameters for the kernel-based model $\M$ in order to minimize the cost function on the actual system. 
With this model, the control task is repeated and, based on the cost evaluation, BO suggests the next kernel and hyperparameters. This procedure is continued until a maximum number of task evaluations is reached or the user rates the closed-loop performance as sufficient enough. 
We now describe the above three steps, i.e. initialization, evaluation and optimization, in detail.

\subsection{Initialization}
We define a set $\K=\{\mathfrak{K}^1,\ldots,\mathfrak{K}^{n_\mathfrak{K}}\}$ of $n_\mathfrak{K}\in\N$ kernel candidates $\mathfrak{K}^j$ that we want to choose the kernel from for our kernel-based model. BO will be used to select the kernel with the best closed-loop performance in this set.
\begin{rem}
The selection of possible kernels can be based on prior knowledge about the system, e.g. smoothness with the Mat\'ern kernel or number of equlibria using a polynomial kernel, see~\cite{beckers:cdc2016} and \cite{bishop2006pattern} for general properties, respectively.
\end{rem}
In addition, each kernel depends on a set of hyperparameters. Since the number of hyperparameters could be different for each kernel, we define a set of sets $\mathcal{P}=\{\Phi^1,\ldots,\Phi^{n_\mathfrak{K}}\}$ such that $\Phi^j\subset\R^{n_{\Phi^j}}$ is a closed set. Here, ${n_{\Phi^j}}$ represents the number of hyperparameters for the kernel $\mathfrak{K}^j$.
Moreover, we assume that $\Phi^j$ is a valid \textit{hyperparameter set}.
\begin{defn}
The set $\Phi$ is called a  hyperparameter set for a kernel function $\mathfrak{K}$ iff the set $\Phi$ is a domain for the hyperparameters of $\mathfrak{K}$.
\end{defn}
For the first evaluation of the closed-loop, the kernel-based model function $\M$ is created with an initial kernel $\mathfrak{K}^j$ of the set $\K$ and hyperparameters $\bm{\varphi}^j\in\Phi^j$ with $j\in\{1,\ldots,n_\mathfrak{K}\}$.
\begin{rem}
One potential way to select the initial kernel and hyperparameters is to set them equal to the kernel and hyperparameters of a prediction model that is optimized with respect to a loss function, e.g., using cross-validation or maximization of the likelihood function~\cite{bishop2006pattern}.
\end{rem}
\subsection{Task Evaluation}
For the $i$-th task evaluation, BO determines an index value~$j\in\{1,\ldots,n_\mathfrak{K}\}$ and a $\bm{\varphi}^j\in\Phi^j$. The control law~\cref{for:ctrllaw} for the kernel-based model $\M$, with the determined kernel~$\mathfrak{K}^j$ and hyperparameters $\bm{\varphi}^j$, is applied to the system~\cref{for:sys}
\begin{align}
    \x_{k+1}&=\f(\x_k,\bm{h}(\y_k,\M(\u_{0:k-1},\y_{0:k},\mathfrak{K}^j,\bm{\varphi}^j))\notag\\
    \y_k&=\g(\x_k,\u_k)\text{ for }k=\{0,\ldots,n-1\}\notag
\end{align}
with fixed $\x_0\in\R^{n_x}$. 
\begin{rem}
We focus here on a single, fixed initial state $x_0$. However, multiple (close by) initial states can be considered by using the expected cost across all initial states.
\end{rem}
The corresponding input and output sequences $\u_{0:k}$ and~$\y_{0:k}$, respectively, are recorded. Afterwards, the cost function given by $C(\y_{0:k},\u_{0:k})$ is evaluated.
\subsection{Model Optimization}
In the next step, we use BO to minimize the cost function with respect to the kernel and its hyperparameters, i.e.
\begin{align}
    [\mathfrak{K}^j,\bm{\varphi}^j]=\argmin_{j\in\{1,\ldots,n_\mathfrak{K}\},\bm{\varphi}^j\in\Phi^j}C(\y_{0:k},\u_{0:k}).\label{for:boopt}
\end{align}
Thus, this problem involves continuous and discrete variables in the optimization task whereas classical BO assumes continuous variables only. To overcome this restriction, a modified version of BO is used where the kernel function is transformed in a way such that integer-valued inputs are properly included~\cite{garrido2017dealing}.\\
Based on previous evaluations of the cost function, BO updates the prior and minimizes the acquisition function. The result is a kernel $\mathfrak{K}^j$ and hyperparameters $\bm{\varphi}^j$ which are used in the model function $\M(\u_{0:k-1},\y_{0:k},\mathfrak{K}^j,\bm{\varphi}^j)$. Then, the corresponding control law is evaluated again on the system and the procedure is repeated until a maximum number of task evaluations has been reached or a sufficient performance level has been achieved.
\subsection{Theoretical Analysis}
In this section, we show that, under some additional assumptions, the stability of the closed-loop is preserved during the task evaluation process and that BO converges to the minimum of the closed-loop cost function. Here, we focus on stationary kernels
\begin{align}
    k(\x,\x^\prime)=k\big((\x-\x^\prime)^\top\Sigma^{-1}(\x-\x^\prime)\big),\,\x,\x^\prime\in\R^{n_x}
\end{align}
with lengthscales $\bm{\varphi}\in\R^{n_\varphi}_{>0}$ and $\Sigma=\diag(\varphi_1,\ldots,\varphi_{n_x})$. Stationary kernels can always be expressed as a function of the difference between their inputs and they are a common choice for kernel-based models~\cite{bishop2006pattern}. 
\begin{assum}
    Let $\Verts{\f}_{\mathfrak{K}^*_{\varphi^*}}< \infty$ and the selected control law~\cref{for:ctrllaw}, based on the model $\M$ with stationary kernel $\mathfrak{K}^*$ and hyperparameters $\bm{\varphi}^*\in\R^{n_\varphi}_{>0}$, guarantees that $\Verts{\y_k}\leq r_y\in\R_{>0}$ for the given system~\cref{for:sys} for $k>n_1\in\N$. \label{ass:stab}
\end{assum}
The first part of the assumption, i.e. the bounded reproducing kernel Hilbert space (RKHS) norm, is a measure of smoothness of the function with respect to the kernel $\mathfrak{K}$ with hyperparameters $\bm{\varphi}^*\in\R^{n_\varphi}_{>0}$. It is a common assumption for stabilizing controllers using kernel-based methods and is discussed in more detail in~\cite{berkenkamp2016safe}.
Controllers that satisfy this property for nonlinear, unknown systems are given, e.g. by~\cite{chowdhary2012model,berkenkamp2016safe,beckers2019automatica}. The focus on stationary kernels is barely restrictive as many successful applied kernels for nonlinear control are stationary. 
\begin{lem}
    With~\cref{ass:stab}, there exists a non-empty set~$\K$ and a hyperparameter set $\Phi^1\supset\{\bm{\varphi}^*\}$ such that $\forall\mathfrak{K}^j\in\K$, for all $\bm{\varphi}^j\in\Phi^j$ the boundedness $\Verts{y_k}\leq r_y$ of the system~\cref{for:sys} for $k>n_1$ is preserved.\label{lem:pres}
\end{lem}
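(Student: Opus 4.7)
The plan is to exploit a continuity/robustness argument around the nominal choice $(\mathfrak{K}^*, \bm{\varphi}^*)$ provided by \cref{ass:stab}. Since Assumption 1 already guarantees $\Verts{\y_k}\leq r_y$ for $k>n_1$ with the nominal kernel and hyperparameters, the task reduces to showing that this boundedness property is preserved on an open neighborhood of $\bm{\varphi}^*$ within a valid hyperparameter domain, and to verifying that at least one such kernel choice (for instance $\mathfrak{K}^*$ itself) yields a non-empty $\K$.

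First, I would establish continuity of the kernel-based model $\M(\u_{0:k-1},\y_{0:k},\mathfrak{K}^*,\cdot)$ in the hyperparameter vector $\bm{\varphi}$. For a stationary kernel, the entries of both $K_{**}=K+\sigma_n^2 I$ and $\bm{k}_*$ in \cref{for:gp2} are continuous in $\bm{\varphi}$ through the smooth dependence of $\Sigma=\diag(\varphi_1,\ldots,\varphi_{n_x})$ on $\bm{\varphi}\in\R^{n_\varphi}_{>0}$. Because $\sigma_n^2>0$ ensures $K_{**}$ stays uniformly invertible on compact subsets of $\R^{n_\varphi}_{>0}$, the posterior mean and variance, and hence $\m_k$, are continuous functions of $\bm{\varphi}$ on any such subset.

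Second, I would invoke the controller construction in the references cited after \cref{ass:stab} (e.g., \cite{berkenkamp2016safe, chowdhary2012model, beckers2019automatica}), where the closed-loop boundedness is obtained from a Lyapunov/ISS-type argument that tolerates a nonzero but sufficiently small model-mismatch term. Combined with continuity of $\bm{h}$ in $\m_k$ and of $\f,\g$ in $\u_k$, the closed-loop trajectory depends continuously on $\bm{\varphi}$ over the finite horizon on which the output enters $\{\Verts{\y}\leq r_y\}$, and the invariance of that sublevel set is an open condition. Therefore there exists an open neighborhood $\Phi^1\subset\R^{n_\varphi}_{>0}$ of $\bm{\varphi}^*$ on which the bound $\Verts{\y_k}\leq r_y$ for $k>n_1$ is preserved; setting $\K=\{\mathfrak{K}^*\}$ (optionally enlarged to include other stationary kernels for which a nominal hyperparameter satisfying \cref{ass:stab} exists, each with its own $\Phi^j$) yields a non-empty collection of the required form.

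The main obstacle is that the RKHS associated with $\mathfrak{K}^*$ depends on $\bm{\varphi}$, so $\Verts{\f}_{\mathfrak{K}^*_{\bm{\varphi}^*}}<\infty$ does not immediately transfer to neighboring hyperparameters, and the controller's stability certificate typically requires a finite RKHS norm bound. I would address this by restricting $\Phi^1$ further to a compact neighborhood on which $\Verts{\f}_{\mathfrak{K}^*_{\bm{\varphi}}}$ remains uniformly bounded; for lengthscale hyperparameters of common stationary kernels (squared exponential, Mat\'ern), standard inclusion arguments between the corresponding RKHSs show that shrinking the lengthscales from $\bm{\varphi}^*$ by a small amount keeps $\f$ in the RKHS with a norm bounded in terms of $\Verts{\f}_{\mathfrak{K}^*_{\bm{\varphi}^*}}$, after which the continuity argument above closes the proof.
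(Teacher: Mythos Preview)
Your proposal is correct in spirit, and the RKHS inclusion argument you sketch in the final paragraph is in fact the \emph{entire} proof in the paper---everything you write before that is extra machinery that the paper does not use.

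The paper's argument is much more direct. It does not establish continuity of the posterior in $\bm{\varphi}$, nor does it appeal to any Lyapunov/ISS robustness margin of the controller. Instead, it reads \cref{ass:stab} as saying: the control law guarantees $\Verts{\y_k}\leq r_y$ \emph{whenever} $\Verts{\f}_{\mathfrak{K}^*_{\bm\varphi}}<\infty$. Hence it suffices to exhibit a box $\Phi^1$ around $\bm{\varphi}^*$ on which the RKHS norm remains finite. For that the paper simply invokes \cref{lem:conv} (Bull's Lemma~4): decreasing any lengthscale keeps $\f$ in the RKHS with norm bounded by $\big(\prod_i \varphi_i/\varphi_i'\big)^{1/2}\Verts{\f}_{\mathfrak{K}^*_{\bm\varphi^*}}$, which furnishes the lower edge $\underline{\bm\varphi}$; an upper edge $\overline{\bm\varphi}$ with $\f\in\mathcal{H}_{\mathfrak{K}^*_{\overline{\bm\varphi}}}$ is asserted to exist by the same lemma read in the other direction. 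Then $\Phi^1=\{\bm\varphi:\underline{\varphi}_i\leq\varphi_i\leq\overline{\varphi}_i\}$ and $\K=\{\mathfrak{K}^*\}$ finish the proof.

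So the difference is one of economy: the paper treats the RKHS-norm transfer as the key lemma and draws the conclusion in one step from the structure of \cref{ass:stab}, whereas you layer a trajectory-continuity argument on top and only reach the RKHS inclusion as a patch at the end. Your route would buy something if the stability certificate required a \emph{fixed} quantitative bound on $\Verts{\f}_{\mathfrak{K}^*_{\bm\varphi}}$ rather than mere finiteness, since then you would genuinely need the perturbation of $\m_k$ to be small; under the assumption as stated, though, the continuity/ISS layer is redundant. If you revise, lead with the RKHS inclusion (your last paragraph), cite \cref{lem:conv} explicitly, and drop the posterior-continuity discussion.
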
 
This lemma guarantees that there exists a kernel set $\K$ and a set $\mathcal{P}$ of hyperparameters that contains the stabilizing kernel~$\mathfrak{K}^*$ and the hyperparameter $\bm{\varphi}^*$ of~\cref{ass:stab}.
Thus, the proposed method can be applied to existing kernel-based control methods without losing achieved guarantees. Before we start with the proof, the following lemma is recalled.
\begin{lem}[{{\cite[Lemma 4]{bull2011convergence}}}]\label{lem:conv}
If $\f\in\mathcal{H}_{\mathfrak{K}_\varphi}$ then $\f\in\mathcal{H}_{\mathfrak{K}_{\varphi^\prime}}$ holds for all $0<\varphi_i^\prime\leq\varphi_i,\forall i\in\{1,\ldots,n_\varphi\}$, and
\begin{align}
    \Verts{\f}^2_{\mathfrak{K}_{\varphi^\prime}}\leq \left( \prod_{i=1}^{n_\varphi}\frac{\varphi_i}{\varphi^\prime_i}\right)\Verts{\f}^2_{\mathfrak{K}_\varphi}.\label{for:lemconv}
\end{align}
\end{lem}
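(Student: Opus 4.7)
The plan is to prove this via the Fourier (spectral) representation of the RKHS of a stationary kernel. By Bochner's theorem, each kernel $\mathfrak{K}_\varphi(x,x') = \phi\bigl((x-x')^\top\Sigma^{-1}(x-x')\bigr)$ is the inverse Fourier transform of a nonnegative spectral density $p_\varphi$. Performing the linear change of variables $z = \Sigma^{-1/2}(x-x')$ in this Fourier representation factors the density as $p_\varphi(\omega) = \sqrt{\det\Sigma}\, q(\Sigma^{1/2}\omega)$, where $q$ is the spectral density of the fixed base radial profile $u \mapsto \phi(\|u\|^2)$ and is independent of $\varphi$. The standard Parseval-type identity for stationary-kernel RKHS's then gives, up to a universal constant,
\begin{align*}
\Verts{\f}^2_{\mathfrak{K}_\varphi} = \int \frac{|\widehat{\f}(\omega)|^2}{p_\varphi(\omega)}\,d\omega.
\end{align*}

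With this representation in hand, the lemma reduces to the pointwise comparison $p_\varphi(\omega)/p_{\varphi'}(\omega) \leq \prod_i \varphi_i/\varphi'_i$. The determinant prefactor contributes $\sqrt{\det\Sigma/\det\Sigma'} = \prod_i\sqrt{\varphi_i/\varphi'_i}$, which is at most $\prod_i \varphi_i/\varphi'_i$ since each ratio $\varphi_i/\varphi'_i \geq 1$. The remaining factor $q(\Sigma^{1/2}\omega)/q(\Sigma'^{1/2}\omega)$ is bounded by $1$ whenever $q$ is radially nonincreasing (as it is for the squared-exponential, Mat\'ern, and other standard families), because $\|\Sigma'^{1/2}\omega\|^2 = \sum_i \varphi'_i\omega_i^2 \leq \sum_i \varphi_i\omega_i^2 = \|\Sigma^{1/2}\omega\|^2$. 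Substituting this pointwise bound into the spectral expression for $\Verts{\f}^2_{\mathfrak{K}_{\varphi'}}$ yields \cref{for:lemconv}, and finiteness of the right-hand side transfers membership from $\mathcal{H}_{\mathfrak{K}_\varphi}$ to $\mathcal{H}_{\mathfrak{K}_{\varphi'}}$.

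The main obstacle is that the lemma is phrased for arbitrary stationary kernels, whereas the spectral-ratio argument uses radial monotonicity of the base density $q$. One way forward is simply to restrict attention to kernels whose spectral density is radially nonincreasing, which covers every standard choice used in kernel-based control. A more robust alternative that avoids the monotonicity hypothesis is to observe that the pointwise inequality above is precisely the statement that $\bigl(\prod_i \varphi_i/\varphi'_i\bigr)\mathfrak{K}_{\varphi'} - \mathfrak{K}_\varphi$ has nonnegative spectral density, and hence is itself positive-definite. Aronszajn's sum theorem then yields $\mathcal{H}_{\mathfrak{K}_\varphi} \subseteq \mathcal{H}_{(\prod_i \varphi_i/\varphi'_i)\mathfrak{K}_{\varphi'}} = \mathcal{H}_{\mathfrak{K}_{\varphi'}}$ together with $\Verts{\f}^2_{(\prod_i \varphi_i/\varphi'_i)\mathfrak{K}_{\varphi'}} \leq \Verts{\f}^2_{\mathfrak{K}_\varphi}$; rescaling the kernel (which scales squared norms by the inverse factor) then recovers \cref{for:lemconv} verbatim.
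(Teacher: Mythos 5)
The paper does not actually prove this statement---it is imported verbatim as \cite[Lemma 4]{bull2011convergence}---so the only meaningful comparison is with Bull's original argument, and your proof is essentially that argument: the Parseval-type spectral characterization of the RKHS norm, the factorization of the spectral density under the lengthscale rescaling, and the pointwise ratio bound using radial monotonicity of the base density. Your handling of the parametrization is also right: since the paper's $\varphi_i$ enter through $\Sigma^{-1}$ as \emph{squared} lengthscales, the determinant prefactor only contributes $\prod_i\sqrt{\varphi_i/\varphi_i^\prime}$, which is dominated by the stated constant $\prod_i\varphi_i/\varphi_i^\prime$; and you correctly identify that radial nonincreasingness of the spectral density is a genuine hypothesis (present in Bull's setting, silently dropped in the paper's restatement) without which the lemma can fail. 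One quibble: your ``more robust alternative'' via Aronszajn's domination theorem does \emph{not} avoid the monotonicity hypothesis---the positive-definiteness of $\bigl(\prod_i\varphi_i/\varphi_i^\prime\bigr)\mathfrak{K}_{\varphi^\prime}-\mathfrak{K}_\varphi$ is certified by exactly the same pointwise spectral inequality, so what that route actually buys is only that you need not assume $\f$ has a well-behaved Fourier transform (or extend $\f$ from a subdomain to all of $\R^{n_x}$), not that you can drop monotonicity.
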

\begin{proof}[{{\Cref{lem:pres}}}]
\Cref{ass:stab} inherently guarantees that at least one kernel $\mathfrak{K}^1=\mathfrak{K}^*$ exist that preserves the boundedness of the system such that we define~$\K=\{\mathfrak{K}^1\}$. Since~\cref{ass:stab} ensures that $\Verts{\f}_{\mathfrak{K}^*_{\varphi^*}}$ is bounded and with~\cref{lem:conv}, the mapping $\f\in\mathcal{H}_{\mathfrak{K}^1_{\varphi^*}}$ and, thus $\Verts{\f}_{\mathfrak{K}^1_{\underline{\varphi}}}$ is bounded for $\underline{\varphi}_i\in\R_{>0},\forall i$ where $\underline{\varphi}_i<\varphi_i^*,\forall i$. For an upper bound, there exist $\overline{\varphi}_i\in\R_{>0},\forall i$ such that $\varphi_i^*<\overline{\varphi}_i$ and $\f\in\mathcal{H}_{\mathfrak{K}^1_{\overline{\varphi}}}$, following~\cref{lem:conv}. Thus, we define the set 
\begin{align}
   \Phi^1=\{\bm{\varphi}\colon\underline{\varphi}_i\leq \varphi_i \leq \overline{\varphi}_i,\forall i\}\label{for:supset}
\end{align}
as proper superset of $\bm{\varphi^*}$. Based on this set, $\Verts{\f}_{\mathfrak{K}^1_{\varphi}}<\infty$ for all $\bm{\varphi}\in\Phi^1$ that guarantees the boundedness.
\end{proof}
Consequently, with~\cref{ass:stab}, the stability of the control loop is preserved during the task evaluation. Furthermore, the minimum cost is not worse than the initial cost after BO as stated in the following.
\begin{cor}
    Let $C_{cl}$ be the minimum cost~\cref{for:cfcn} after BO~\cref{for:boopt} with $\K=\{\mathfrak{K}^1=\mathfrak{K}^*\}$ and $\Phi^1$ of~\cref{for:supset}. Let $C_{ol}$ be the initial cost based on the control with kernel $\mathfrak{K}^*$ and hyperparameter~$\bm{\varphi}^*$ then $C_{cl}\leq C_{ol}$ holds.
\end{cor}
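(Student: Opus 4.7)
The plan is essentially a containment argument: show that the initial configuration $(\mathfrak{K}^*,\bm{\varphi}^*)$ lies in the search space BO is optimizing over, and then invoke the definition of minimum.

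First I would observe that by the construction in the proof of \Cref{lem:pres} we have $\mathfrak{K}^*=\mathfrak{K}^1\in\K$ and, by \cref{for:supset}, $\bm{\varphi}^*\in\Phi^1$ (since $\underline{\varphi}_i<\varphi_i^*<\overline{\varphi}_i$ for every component). Therefore the pair $(\mathfrak{K}^*,\bm{\varphi}^*)$ is a feasible point of the optimization problem \cref{for:boopt}.

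Next I would evaluate the closed-loop cost functional \cref{for:cfcn} at this feasible point: rolling out the dynamics \cref{for:sys} with the control law \cref{for:ctrllaw} built from $\M(\cdot,\cdot,\mathfrak{K}^*,\bm{\varphi}^*)$ reproduces exactly the trajectory that generated the initial cost $C_{ol}$, hence the value of the objective in \cref{for:boopt} at $(\mathfrak{K}^*,\bm{\varphi}^*)$ equals $C_{ol}$. Since $C_{cl}$ is by definition the minimum of this objective over the feasible set, and the feasible set contains a point with objective value $C_{ol}$, the inequality $C_{cl}\leq C_{ol}$ follows immediately from the definition of minimum.

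Because the statement is this close to a tautology, I do not foresee any genuine obstacle; the only thing to be careful about is making clear that the corollary refers to the minimum value attained over the whole BO search space (so that the bound holds regardless of how many iterations BO has performed, as long as the initial point is included, e.g.\ as the first evaluation suggested by \Cref{rem:km}-style initialization). If instead $C_{cl}$ were interpreted as the cost at the last BO iterate, then one would additionally need to assume that BO retains the best-so-far evaluation, which is a standard convention; I would state this convention explicitly at the start of the proof to avoid ambiguity.
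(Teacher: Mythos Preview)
Your proposal is correct and follows essentially the same containment argument as the paper: both rely on $\mathfrak{K}^*\in\K$ and $\bm{\varphi}^*\in\Phi^1$ to conclude that the initial configuration is feasible for~\cref{for:boopt}, whence the minimum cannot exceed $C_{ol}$. The paper compresses this into a single sentence, while you spell out the trajectory-reproduction step and the best-so-far convention explicitly, but there is no substantive difference in approach.
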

\begin{proof}
    Since $C_{cl}$ is the minimum cost after BO that starts with the initial, data-based selected kernel $\mathfrak{K}^*$ and hyperparameter $\bm{\varphi}^*$, it clearly follows that $C_{cl}\leq C_{ol}$ because of $\mathfrak{K}^*\in\K$ and $\bm{\varphi}^*\in\Phi^1$.
\end{proof}

We now show that BO can converge to the global minimum of the cost function $C$ under specific conditions starting with the following assumption.
\begin{assum}
    The RKHS norm of the cost function is bounded, i.e. $\Verts{C}_{\mathfrak{K}}\!\leq r\in\!\R_{>0}$ with respect to the kernel~$\mathfrak{K}$ of the GP~\cref{for:likopt} that is used as prior $C\sim\mathcal{GP}(0,\mathfrak{K})$ of the Bayesian optimization~\cref{for:boopt}.\label{ass:boundedC}
\end{assum}
Intuitively, \Cref{ass:boundedC} states that the kernel of the GP for BO is selected such that the GP can properly approximate the cost function. This sounds paradoxical since the cost function is unknown because of the unknown system behavior. However, there exist some kernels, so called universal kernels, which can approximate at least any continuous function arbitrarily precisely~\cite[Lemma 4.55]{steinwart2008support}.
\begin{lem}[{{~\cite{srinivas2012information}}}]\label{lem:bayopt}
With~\cref{ass:boundedC}, BO in~\cref{for:boopt} with upper confidence bound acquisition function~\cite[Eq.(6)]{srinivas2012information} converge with high probability to the global minimum of $C$.
\end{lem}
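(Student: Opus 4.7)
The plan is to invoke the cumulative-regret bound of the GP-UCB algorithm from~\cite{srinivas2012information} and then translate its sublinear growth into convergence of the best-so-far iterate. First I would verify that \cref{ass:boundedC} places our surrogate problem into the \emph{agnostic} RKHS setting of that paper: the unknown closed-loop cost $C$ lies in the RKHS $\mathcal{H}_\mathfrak{K}$ associated with the BO prior, with $\Verts{C}_\mathfrak{K}\leq r$, and the observation noise arising from repeated closed-loop evaluations of~\cref{for:cfcn} is conditionally sub-Gaussian (a standard implicit hypothesis). With the upper confidence bound rule and the confidence schedule $\beta_t$ of~\cite[Eq.\,(6)]{srinivas2012information}, the main theorem there guarantees that, with probability at least $1-\delta$, the cumulative regret after $T$ iterations satisfies
\[
    R_T \;=\; O\!\bigl(\sqrt{T\,\beta_T\,\gamma_T}\bigr),
\]
where $\gamma_T$ denotes the maximum information gain after $T$ queries.

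Second, I would combine this with the $\gamma_T$ bounds given in the same paper for standard stationary kernels (squared exponential, Mat\'ern), which are polylogarithmic in $T$; since $\beta_T$ is also polylogarithmic in $T$, we conclude $R_T/T \to 0$. The simple regret $\min_{i\leq T}\bigl[C(\bm{\varphi}_i) - C(\bm{\varphi}^\star)\bigr]$ of the best queried configuration is upper bounded by the running average $R_T/T$ of the cumulative regret, so the incumbent suggestion returned by BO converges in value to the global minimum of $C$ with high probability, which is precisely the claim of the lemma.

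The main obstacle is that the domain of the BO problem in~\cref{for:boopt} is the hybrid product of the discrete kernel index set $\{1,\ldots,n_\mathfrak{K}\}$ with the continuous hyperparameter sets $\Phi^j$, whereas~\cite{srinivas2012information} is stated for a single compact continuous domain. I would resolve this through the integer-embedding construction of~\cite{garrido2017dealing}, already adopted in the algorithm description, which recasts the mixed-variable problem as a BO problem over a single augmented compact continuous domain with a suitably modified kernel, so that the regret bound applies verbatim; alternatively one can run $n_\mathfrak{K}$ parallel GP-UCB processes and take a union bound, which inflates the failure probability by only the factor $n_\mathfrak{K}$ and is absorbed into the ``high probability'' qualifier of the statement.
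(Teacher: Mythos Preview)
Your argument is sound, but note that the paper does not actually supply a proof for this lemma: it is stated as a direct citation of~\cite{srinivas2012information} and no further justification is given in the text. In that sense there is no ``paper's own proof'' to compare against; the authors simply invoke the GP-UCB regret result as a black box. Your write-up is a correct and reasonably detailed unpacking of why that citation is warranted---the passage from the cumulative-regret bound via $R_T/T\to 0$ to vanishing simple regret is exactly the standard route, and your handling of the mixed discrete/continuous domain (either via the embedding of~\cite{garrido2017dealing} or a union bound over the $n_\mathfrak{K}$ kernels) fills a gap the paper leaves implicit.
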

\section{Evaluation}
In this section, we present a simple illustrative example that highlights our closed-loop model selection approach for kernel-based models. In addition, an example with a 3-DOF robot demonstrates the applicability of the proposed approach to hardware testbeds. BO is used with the expected-improvement-plus as acquisition function because of its satisfactory performance in practical applications, see~\cite{bull2011convergence}, using a GP as prior.
\subsection{Simulation}
\label{sec:sim}
Consider the following one-dimensional system
\begin{align}
\begin{split}
x_{k+1}&=\exp(-\frac{1}{100}x_k^2)\sin(x_k)+\frac{1}{3}x_k+u_k,\, y_k=x_k\label{for:1dimsys}
\end{split}
\end{align}
with state $x_k$ and control $u_k$ at time $k$. For the purpose of this example, we assume that the system dynamics in \eqref{for:1dimsys} are unknown yet we wish to avoid a high-gain control approach due to its unfavorable properties~\cite{isidori2013nonlinear}, and use the proposed closed-loop model selection framework to optimize the control performance. As control law, a feedback linearization 
\begin{align}
u_k=-\hat{f}(x_k\vert\M,\D)+\frac{1}{2}x_k\label{for:1dimctrl}
\end{align}
is applied with the prediction $\hat f$ of a Support vector machine model $\M$. The data set $D$ consists of 11 homogeneously distributed training pairs $\{x^j_k,x^j_{k+1}\}_{j=1}^{11}$ of the system~\cref{for:1dimsys} in the interval $x_k\in[-10,10]$ with $u_k=0$. The linear, polynomial (cubic) and the Gaussian kernel are selected as possible kernel candidates, see~\Cref{tab:kernel_cand} for details.
The Gaussian kernel possesses one hyperparameter $\varphi_1$ which is a scaling factor for the data. In addition, the regression of the SVM depends on a hyperparameter $\varphi_2$ that defines the smoothness of the prediction and affects the number of support vectors, see~\cite{kecman2001learning}.\\
First, we evaluate a classical, data-based procedure which optimizes the kernel and the hyperparameters with respect to the cross-validation loss function~\cite{steinwart2008support} based on the training data only. Using BO, a minimum loss of $0.9127$ is found using the linear kernel with $\varphi_2=0.0336$,~\Cref{tab:comp}. 
  \begin{table}[b]
  \vspace{-0.5cm}
	\caption{Kernel candidates\label{tab:kernel_cand}\vspace{-0.2cm}}
	\begin{tabularx}{\columnwidth}{p{2.5cm}p{5cm}}
		\toprule
		Kernel & Formula\\
		\midrule
		Linear & $\mathfrak{K}(\x,\x^\prime)=\x^\top\x^\prime$\\
		Polynomial (cubic) & $\mathfrak{K}(\x,\x^\prime)=(1+\x^\top\x^\prime)^3$\\
		Gaussian & $\mathfrak{K}(\x,\x^\prime)=\exp(-\frac{\Verts{\x-\x^\prime}^2}{\varphi_1^2}),\,\varphi_1\in\R$\\
	    \bottomrule
	\end{tabularx}
	\vspace{-0.2cm}
\end{table}
  \begin{table}[b]
  \vspace{-0.0cm}
	\caption{Comparison between data-based, data-based with additional training data and closed-loop optimization\label{tab:comp}\vspace{-0.2cm}}
	\begin{tabularx}{\columnwidth}{lllll}
		\toprule
		Method		&	Selected kernel & $\varphi_1,\varphi_2$ & Loss & Cost\\
		\midrule
		Data-based & Linear & $-,0.034$ & $0.913$ & $204.477$\\
		Data-based AT & Linear & $-,0.301$ & $0.09$ & $199.634$\\
		Closed-loop & Gaussian & $\overline{2.333},\overline{0.013}$ & $\overline{2.491}$ & $\overline{16.410}$\\
	    \bottomrule
	\end{tabularx}
	\vspace{-0.2cm}
\end{table}	
Using this linear model in the control loop with the nonlinear system~\cref{for:1dimsys} and control law~\cref{for:1dimctrl} for $x_0=3$, the control error remains above zero, see~\cref{fig:sim1}. With the cost function $C=\sum_{k=0}^{9}k x_k^2$, the trajectory generates a cost of $204.4769$.\\
In comparison, the hyperparameters and the kernel are optimized with the proposed method. For this purpose, we evaluate the performance of the closed-loop system and use BO to compute the next promising kernel and hyperparameter combination.~\Cref{fig:sim2} shows the mean and standard deviation of 20 repetitions over 50 trials each. The repetitions are run since BO exploration of the cost is also affected by randomness. The cost is reduced to a mean value of~$C=\overline{16.410}$ and the loss is $\overline{2.491}$.~\Cref{fig:sim1} shows that the regression is more accurate which results in a reduced control error.~\Cref{tab:comp} also presents the results for adding the collected data of all the 50 trials to the existing data to redefine the model (Data-based AT). Even with more training data, the data-based optimization favors the linear kernel.
\subsubsection{Discussion}
The example demonstrates that the optimization based on the training data only can lead to a reduced performance of the closed-loop system.~\Cref{tab:comp} clearly shows that the data-based optimization results in a smaller loss with the linear kernel but generates a higher cost of the closed-loop system. In comparison, the closed-loop optimization finds a set of hyperparameters with the Gaussian kernel that significantly reduced the control error even if the loss of the model is higher. Thus, especially in the case of sparse data, the data-based optimization can misinterpret the data which can be avoided with the closed-loop model selection. We observe that at the beginning of the closed-loop optimization, BO switches a lot between the kernels and towards the
end, it focus on the hyperparameters. Using the data which is obtained during the 50 trials to refine the model in data-based manner only slightly improves the performance but heavily increases the computational time of the kernel-based model due to the larger training data set. 
\begin{figure}[t]
\begin{center}
\begin{tikzpicture}
\pgfplotsset{yticklabel style={text width=2em,align=right}}
\begin{axis}[
  xlabel={Time step},
  ylabel={Control error},
  legend pos=north west,
  grid style={dashed,gray},
  grid = both,
       width=\columnwidth,
  height=4cm,
  ymin=-0.2,
  ymax=4,
  xmin=-0.2,
  xmax=9.4,
  legend style={font=\footnotesize},
  legend cell align={left},
  legend pos=north east]
\addplot+[name path=varp1, color=blue,opacity=0.3, no marks] table [x index=0,y expr=\thisrowno{2}+5*\thisrowno{3}]{data/figure1_dyn.dat};
\addplot+[name path=varm1, color=blue,opacity=0.3, no marks] table [x index=0,y expr=\thisrowno{2}-5*\thisrowno{3}]{data/figure1_dyn.dat};
\addplot[blue,opacity=0.3] fill between[ of = varm1 and varp1]; 
\addplot[color=blue,line width=1pt,mark=o,mark size=3] table [x index=0,y index=2]{data/figure1_dyn.dat};
\addplot[color=red,line width=1pt,mark=+,mark size=5] table [x index=0,y index=1]{data/figure1_dyn.dat};
\addplot[color=red,line width=1pt] coordinates {(20,20)};
\addplot[color=blue,line width=1pt] coordinates {(20,20)};
\legend{,,,,,after data-based optimization,after closed-loop optimization};
\end{axis}
\end{tikzpicture}
	\vspace{-0.7cm}\caption{Control error (top) and system model (bottom) using closed-loop model optimization for 20 repetitions with mean and $5\sigma$ deviation (blue) and data-based model selection (red).}\vspace{-0.6cm}
	\label{fig:sim1}
\end{center}
\end{figure}
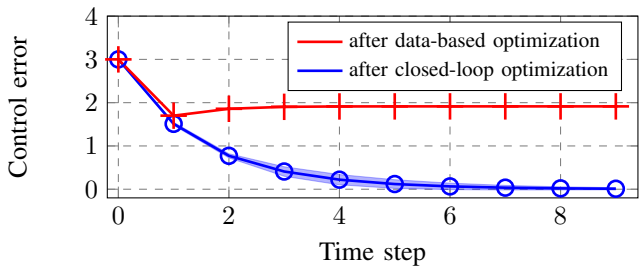
\begin{figure}[t]
\begin{center}
	\begin{tikzpicture}
\begin{axis}[
  xlabel={Task evaluations},
  ylabel={Min. cost $C$},
  legend pos=north west,
  grid style={dashed,gray},
  grid = both,
       width=\columnwidth,
  height=4cm,
  ymin=0,
  ymax=400,
  xmin=1,
  xmax=50,
  legend style={font=\footnotesize},
  legend cell align={left}]
\addplot+[name path=varp1, color=blue,opacity=0.3, no marks] table [x index=0,y expr=\thisrowno{1}+\thisrowno{2}]{data/figure1_bo.dat};
\addplot+[name path=varm1, color=blue,opacity=0.3, no marks] table [x index=0,y expr=\thisrowno{1}-\thisrowno{2}]{data/figure1_bo.dat};
\addplot[color=blue,line width=1pt] table [x index=0,y expr=\thisrowno{1}]{data/figure1_bo.dat};
\addplot[blue,opacity=0.3] fill between[ of = varm1 and varp1]; 
\end{axis}
\end{tikzpicture} 
	\vspace{-0.6cm}\caption{Minimum of the cost function over the number of trials for 20 repetitions for the closed-loop model selection algorithm.}\vspace{-0.5cm}
	\label{fig:sim2}
\end{center}
\vspace{-0.4cm}
\end{figure}
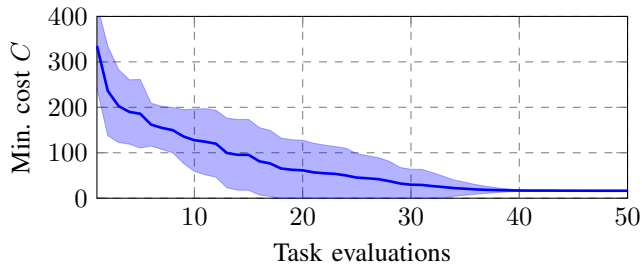
\subsection{Experiment}
\label{sec:exp}
\subsubsection{Setup}
For the experimental evaluation, we use the \mbox{3-dof} SCARA robot CARBO as pictured in~\cref{fig:figure_roboter}. The links between the joints have a length of $\SI{0.3}{\meter}$ and a spoon is attached at the end effector of the robot.
The goal is to follow a given trajectory as precisely as possible without using high feedback gains, which might result in several practical disadvantages, see~\cite{nguyen2008computed}. Therefore,  a precise model of the system's dynamics is necessary. Since the modeling of the nonlinear fluid dynamics with a parametric model would be very time consuming, we use a computed torque control method based on a GP model which allows high performance tracking control while also being able to guarantee the stability of the control loop~\cite{beckers2019automatica}. Underlying, a low level PD-controller enforces the generated torque by regulating the voltage based on a measurement of the current. The controller is implemented in MATLAB/Simulink on a Linux real-time system with a sample rate of $\SI{1}{\milli\second}$. For the implementation of the GP model, we use the GPML toolbox\footnote{http://www.gaussianprocess.org/gpml/code}. The desired trajectory follows a circular stirring movement through the fluid with a frequency of $\SI{0.5}{hz}$.\\
\textbf{Modeling:} Here, we use a Gaussian process model $\M$ as kernel-based model technique based on 223 collected training points. The data is collected around the desired trajectory using a high gain controller. The placement of the training points heavily influences the control performance. However, the proposed approach focuses on improving the performance based on existing data. Each data pair consists of the position and velocity of all joints $[\q,\dq]^\top$ and the corresponding torque for the $i$-th joint, $\tau_i$. Since the GP produces one-dimensional outputs only, 3 GPs are used in total for the modeling of the robot's dynamics. Each GP $i=1,\ldots,3$ uses a squared exponential kernel 
\begin{align}
\mathfrak{K}(\x,\x^\prime)={\varphi}_i^2\exp\left(\frac{-\Verts{\x-\x^\prime}^2}{\varphi_{i+3}^2}\right),\,\varphi_i\in\R\setminus\{0\}
\end{align}
that can approximate any continuous function arbitrarily exactly. With $\bm{\varphi}=[\varphi_1,\ldots,\varphi_6]$ and the signal noise $\bm{\sigma}_n\in\R^3$, see~\cite{rasmussen2006gaussian}, a total number of 9 parameters must be optimized. In contrast to the simulation, the kernel is fixed to reduce the optimization space and thus, the number of task evaluations.
\begin{figure}[t]
\vspace{0.2cm}
\begin{center}
	\includegraphics[width=0.65\columnwidth]{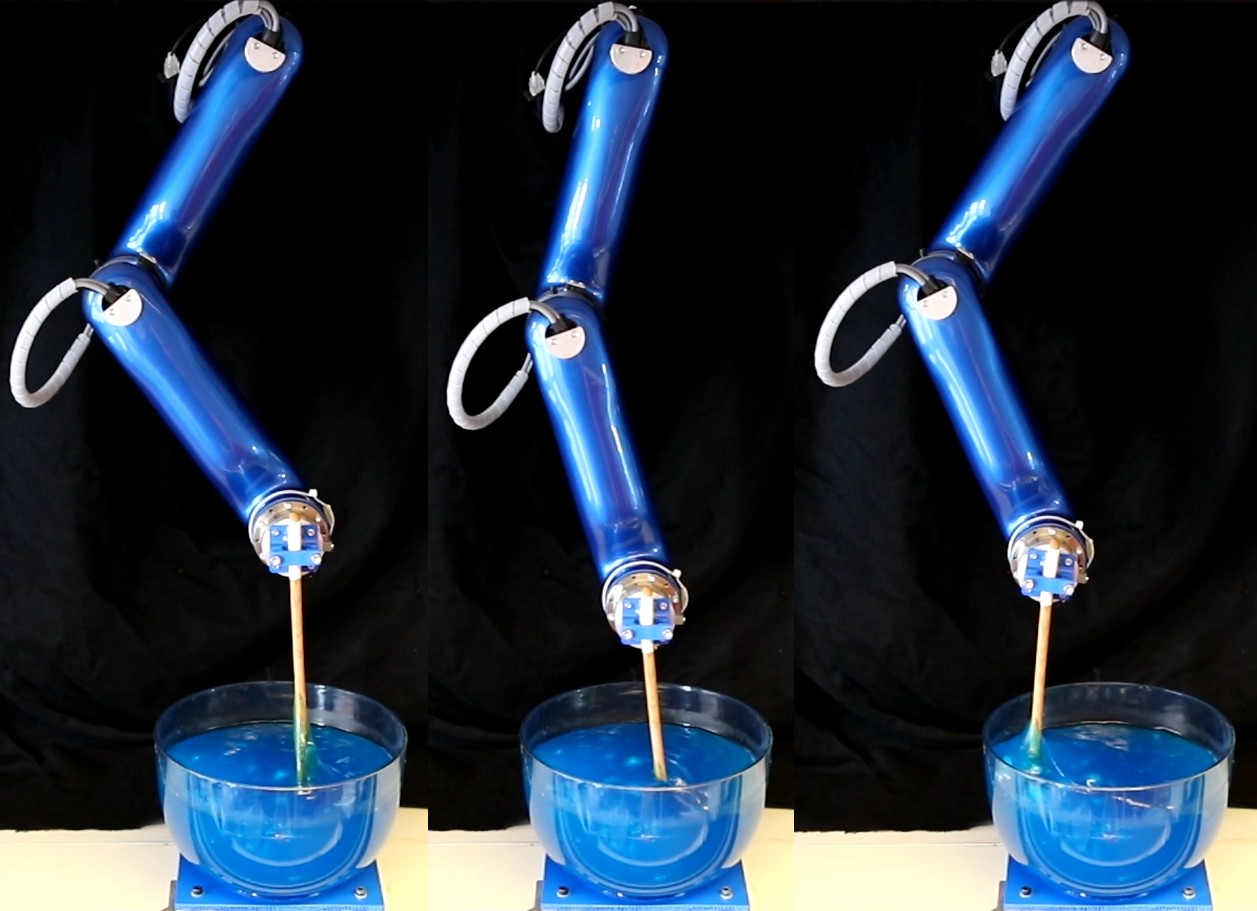}
	\vspace{-0.2cm}\caption{Stirring with the 3-dof SCARA robot CARBO.}\vspace{-0.4cm}
	\label{fig:figure_roboter}
\end{center}
\vspace{-0.4cm}
\end{figure}
\textbf{Control law:} The control input, i.e. the torque $\bm{\tau}(\dq,\q)$ for all joints, is generated based on an estimated parametric model and the mean prediction $\bm{\mu}$ of the GP model as feed-forward component and a low gain PD-feedback part
\begin{align}
\bm{\tau}_d=\hat H\ddq_d+\hat C\dq_d+\hat g+\bm{\mu}(\dq,\q\vert\M)-K_d\de-K_p\e.
\end{align}
Here, the desired trajectory is given by $\q_d,\dq_d$ and $\ddq_d$ with the error $\de=\dq_d-\dq,\e=\q_d-\q$. The feedback matrices are given by $K_p=\diag([60,40,10])$ and $K_d=\diag([1,1,0.4])$. The estimated parametric model is derived from a mathematical model where the parameters are physically measured. For the discretization of the control input, a zero-order method is used. For more details see~\cite{beckers2019automatica}. 
\subsubsection{Evaluation}
The evaluation of the performance of the closed-loop is based on the cost function
\begin{align}
C=\frac{1}{2000}\sum_{k=0}^{2000}\e(kT)^\top\e(kT)
\end{align}
with $T=\SI{1}{\milli\second}$. Therefore, the cost function is a measure for the tracking accuracy of the stirring movement. We consider as kernel candidate the squared exponential kernel, such that only the hyperparameters $\bm{\sigma}_n,\bm{\varphi}$ are optimized.~\Cref{tab:comp_exp} shows the
  \begin{table}[b]
	\vspace{-0.4cm}\caption{Comparison: data-based and closed-loop optimization\label{tab:comp_exp}\vspace{-0.2cm}}
	\begin{tabularx}{\columnwidth}{lll}
		\toprule
		Value		&	Data-based & Closed-loop\\
		\midrule
		$\bm{\sigma}_n$ & $[0.10,\num{3e-3},\num{6e-4}]$ & $[0.20,\num{4e-3},\num{3e-4}]$\\
		$\varphi_{1,2,3}$ & $[3.49,1.42,2.87]$ & $[2.61,1.68,5.70]$\\
		$\varphi_{4,5,6}$ & $[1.21,0.25,0.27]$ & $[ 0.80,0.27,0.29]$\\
		Log. likelihood & $[89,-121,-176]$ & $[115,-113,-136]$ \\
		Cost (Tracking error) & $1.49$ & $\bm{1.05}$\\
	    \bottomrule
	\end{tabularx}
	\vspace{-0.2cm}
\end{table}	
comparison between the data-based and the closed-loop optimization. In the data-based case, the hyperparameters are optimized based on a gradient method to minimize the log likelihood function (in this case, BO of the hyperparameters results in the same values). In contrast, BO is used to minimize the tracking error in the closed-loop optimization. The initial values of the hyperparameters are set to the values of the data-based optimization. The bounds are defined as the $0.5$ and $2$ times of the initial values. The evolution of the minimum cost over the trials, where each trial is a single stirring movement, is shown in~\cref{fig:figure_exp_trial}.\newline The comparison of the joint position error for the data-based and closed-loop optimization is shown in~\cref{fig:figure_exp_error}. 
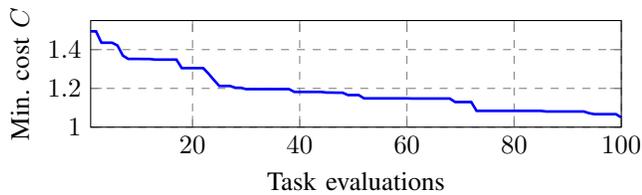
\begin{figure}
\begin{center}
\vspace{0.2cm}
	\begin{tikzpicture}
\begin{axis}[
  xlabel={Task evaluations},
  ylabel={Min. cost $C$},
  legend pos=north west,
  grid style={dashed,gray},
  grid = both,
       width=\columnwidth,
  height=3cm,
  ymin=1,
  ymax=1.55,
  xmin=1,
  xmax=100,
  legend style={font=\footnotesize},
  legend cell align={left}]
\addplot[color=blue,line width=1pt,no marks] table [x index=0,y index=1]{data/figure_exp_trials.dat};
\end{axis}
\end{tikzpicture} 
	\vspace{-0.8cm}\caption{Minimum of the cost function over the number of trials.}\vspace{-0.4cm}
	\label{fig:figure_exp_trial}
\end{center}
\vspace{-0.4cm}
\end{figure}
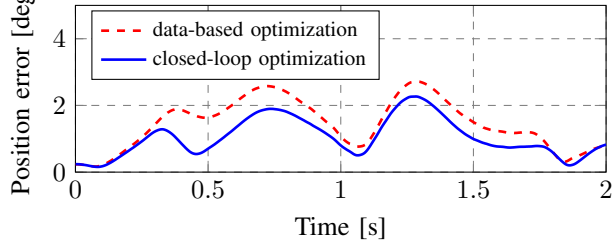
\begin{figure}
\begin{center}
	\begin{tikzpicture}
\begin{axis}[
  xlabel={Time [s]},
  ylabel={Position error [deg]},
  legend pos=north west,
  grid style={dashed,gray},
  grid = both,
       width=\columnwidth,
  height=3.8cm,
  ymin=0,
  ymax=5,
  xmin=0,
  xmax=2,
  legend style={font=\footnotesize},
  legend cell align={left}]
\addplot[color=red,dashed,line width=1pt,no marks] table [x index=0,y index=1]{data/figure_exp_error.dat};
\addplot[color=blue,line width=1pt,no marks] table [x index=0,y index=2]{data/figure_exp_error.dat};
\legend{data-based optimization,closed-loop optimization};
\end{axis}
\end{tikzpicture} 
	\vspace{-0.4cm}\caption{Comparison of the root square position error of all joints.}\vspace{-0.4cm}
	\label{fig:figure_exp_error}
\end{center}
\vspace{-0.4cm}
\end{figure}
\subsubsection{Discussion}
After 100 trials, the tracking error is decreased by 30\% through the optimization of the Gaussian process model only. Even if the resulting hyperparameters are sub-optimal with respect to the likelihood function, see~\Cref{tab:comp_exp}, the performance of the closed-loop is significantly improved. In comparison to collecting more training data to improve the model, the proposed method does not increase the computational burden of the Gaussian process prediction which is often critical in real-time applications. Since only the model is adapted, the properties of the closed-loop control architecture are also preserved. 

\section*{Conclusion}
In this paper, we present a framework for the model selection for kernel-based models to directly optimize the overall closed-loop control performance. For this purpose, the kernel and its hyperparameters are optimized using Bayesian optimization with respect to a cost function that evaluates the performance of the closed-loop. It is shown that this approach allows to preserve the control architecture properties as only the model is adapted. Simulations and hardware experiments demonstrate the advantages of the proposed approach to data-based model selection techniques. 

\section*{Acknowledgments}
The research has received funding from the ERC Starting Grant ``Con-humo'' n\textsuperscript{o}337654 and BaCaTec grant 9-[2018/1].

\bibliography{mybib}
\bibliographystyle{ieeetr}

\end{document}